%
\documentclass[runningheads]{llncs}
\usepackage[T1]{fontenc}
%
\usepackage[utf8]{inputenc}
\usepackage{amsmath}
\usepackage{amssymb}
\usepackage{graphicx}
\usepackage{xcolor}
\usepackage{url}
\usepackage{subfigure}
\usepackage{multicol}
\usepackage{latexsym,amsmath,amssymb,stmaryrd,mathtools}
\usepackage{graphicx,epsfig,tikz}
\usepackage[normalem]{ulem}
\usepackage{float}
\usetikzlibrary{shapes.geometric,positioning,chains,fit,calc}
\tikzset{near start abs/.style={xshift=1cm},
	node/.style={circle,draw},
	nodeone/.style={circle,draw,gray, line width=0.7mm},
	nodered/.style={circle,draw,red, line width=0.7mm}}

\def\addlegendimage{\csname pgfplots@addlegendimage\endcsname}

\usepackage[useregional]{datetime2}

\newcommand{\LCP}{\textit{LCP}}
\newcommand{\LCS}{\textit{LCS}}

\usepackage{algorithm}
\PassOptionsToPackage{noend}{algpseudocode}
\usepackage{algpseudocode}

\bibliographystyle{plainurl}

\begin{document}
\title{Longest Common Prefix Arrays for Succinct $k$-Spectra\thanks{Supported in part by the Academy of Finland via grants 339070 and 351150.}}
%
%
\author{Jarno N. Alanko \and
Elena Biagi \and
Simon J. Puglisi\orcidID{0000-0001-7668-7636}}
\authorrunning{J. N. Alanko, E. Biagi, S. J. Puglisi}
%
\institute{
Helsinki Institute for Information Technology\\
Department of Computer Science, University of Helsinki, Finland\\
\email{jarno.alanko@helsinki.fi}}
\maketitle              
\begin{abstract}
The $k$-spectrum of a string is the set of all distinct substrings of length $k$ occurring in the string. $K$-spectra have many applications in bioinformatics including pseudoalignment and genome assembly.
The \emph{Spectral Burrows-Wheeler Transform} (SBWT) has been recently introduced as an algorithmic tool to efficiently represent and query these objects.
The longest common prefix ($\LCP$) array for a $k$-spectrum is an array of length $n$ 
that stores the length of the longest common prefix of adjacent $k$-mers as they occur in lexicographical order. The $\LCP$ array has at least two important applications, namely to accelerate pseudoalignmet algorithms using the SBWT and to allow simulation of variable-order de Bruijn graphs within the SBWT framework. In this paper we explore algorithms to compute the $\LCP$ array efficiently from the SBWT representation of the $k$-spectrum. Starting with a straightforward $O(nk)$ time algorithm, we describe algorithms that are efficient in both theory and practice. We show that the $\LCP$ array can be computed in optimal $O(n)$ time, where $n$ is the
length of the SBWT of the spectrum. In practical genomics scenarios, we show that this theoretically optimal algorithm is indeed practical, but is often outperformed on smaller values of $k$ by an asymptotically suboptimal algorithm that interacts better with the CPU cache. Our algorithms share some features with both classical Burrows-Wheeler inversion algorithms and LCP array construction algorithms for suffix arrays.

\keywords{longest common prefix \and LCP \and longest common suffix \and k-mer  \and string algorithms \and compressed data structures \and de Bruijn graph \and Burrows-Wheeler transform \and BWT}
\end{abstract}
%
%
%

\section{Introduction}

The {\em $k$-spectrum} of a string $S$ is the set of substrings of a given length $k$ that occur in $S$.
Indexing $k$-spectra has become an important topic in bioinformatics, perhaps most notably in the form of de Bruijn graphs, which are a long-standing tool for genome assembly~\cite{compeau2011bruijn} and more recently for pangenomics~\cite{AVMP23,holley2020bifrost,marchet2021data}. In metagenomics, $k$-spectra are used as concise approximation of the sequence content of the sample, allowing rapid similarity estimation between data collected from sequencing runs~\cite{maillet2012compareads,ondov2016mash}. In most current genomics applications $k$ is in the range from 20 to 100.

Recently, the Spectral Burrows-Wheeler transform (SBWT)~\cite{ACDA2023} has been introduced as an efficient way to losslessly encode and query $k$-spectra. In particular, the SBWT encodes the $k$-mers of the spectrum in {\em colexicographical} order. Combining the SBWT with entropy compressed bitvectors leads to a data structure that encodes the spectrum in little more than 2 bits per $k$-mer~\cite{ACDA2023,ABPV23}. Remarkably, while in this form, it is also possible to answer lookup queries on the spectrum rapidly, in fact in $O(k)$ time.

The SBWT allows a lookup query for a given $k$-mer to be reduced to at most $k$ {\em right-extension queries}. The input to a right-extension query is a letter $c$ and an interval $[i,j]$ in the colexicographic ordering of the $k$-mers of the spectrum such that all $k$-mers in the interval share a suffix $X$. The query returns the interval $[i',j']$ that contains all the $k$-mers that have $Xc$ as a suffix (or an empty interval if none do). 

\medskip

Our focus in this paper is on augmenting the SBWT with a data structure called the longest common suffix (LCS) array that stores the lengths of the longest common suffixes of adjacent $k$-mers in colexicographical order (we give a precise definition below\footnote{We remark here that the LCS array of a colexicographically-ordered spectrum is equivalent to the longest common prefix (LCP) array of the lexicographically-ordered spectrum, and the algorithms we describe in this paper to compute the LCS array are trivially adapted to compute the LCP array.}). The $\LCS$ array allows us to support so-called  {\em left contraction} queries: given an interval $[i,j]$ in the colexicographical ordering of the $k$-mers of the spectrum containing all the $k$-mers that share a suffix $X$ of length $k' \in (0, k]$
and a contraction point $t<k'$, a left contraction returns the interval $[i',j']$ containing all the $k$-mers having $X[t..k']$ as a suffix. Left contractions have at least two interesting applications, namely the implementation of variable-order de Bruijn graphs~\cite{BBGPS15} and streaming $k$-mer queries~\cite{ACDA2023}. We avoid further treatment of these applications here, and refer the reader to~\cite{BBGPS15,ACDA2023} for details. Our focus instead is on the efficient construction of the $\LCS$ array of a $k$-spectrum given its SBWT, which is also an interesting problem in its own right.

We are aware of little prior work on efficient $\LCS$ array construction for $k$-spectra. A naive approach is to expand the entire contents of the spectrum from the SBWT and scan it in colexicographical order. This requires $O(nk)$ time and $O(nk\log\sigma)$ bits of space, where $\sigma$ is the size of the alphabet of the $k$-mers in the spectrum. Bowe et al.~\cite{BBGPS15} make use of the $\LCS$ array for a $k$-spectra for variable order de Bruijn graphs, but do not address construction. Very recently, Conte et al.~\cite{conte2023computing} introduced LCS arrays for Wheeler graphs\footnote{Wheeler graphs are a class of graphs including de Bruijn graphs, that admit a generalization of the Burrows-Wheeler transform. The SBWT can be seen as a special case of the Wheeler graph indexing framework.}, but describe no construction algorithm, mentioning only in passing that a polynomial-time algorithm is possible. Prophyle, due to Salikov et al.~\cite{salikhov2017efficient} uses $k$-LCP information for sliding window queries on the BWT and describes an $O(nk)$ construction algorithm, where $n$ is the size of the full BWT, which can be orders of magnitude larger than the SBWT for repetitive datasets.

\paragraph{Contribution.} We describe three different algorithms for computing the $\LCS$ array of a $k$-spectrum from its SBWT. The first of these essentially decodes the $k$-mers of the $k$-spectrum in colexicographical order starting from their rightmost symbols in $k$ rounds, keeping track of when the suffixes become distinct using just $n(1 + \log\sigma)$ bits of side information (significantly less than the naive method mentioned above) and taking $O(nk)$ time overall. Our second approach is similar, but exploits the small DNA alphabet to decode multiple symbols per round with the effect of reducing computation and, importantly, CPU cache misses. Its running time is $O(cn + (k-c)n/c)$ overall with $O(nc\log\sigma + \sigma^c\log n)$ bits of extra space, where $c \le k$ is a parameter controlling a space-time tradeoff. Our final algorithm runs in time linear in $n$ --- independent of $k$ --- and while it is shaded by the second algorithm on smaller $k$, it becomes dominant as $k$ grows.

%

\medskip

The remainder of this article is organized as follows. The next section sets notation and basic definitions. Sections~\ref{sec:basic}-\ref{sec:linear} then describe the three above-mentioned $\LCS$ array construction algorithms in turn. Section~\ref{sec:experiments} presents an experimental analysis of their performance in the context of a real pangenomic indexing task. Conclusions, reflections, and avenues for future work are then offered.

\section{Preliminaries}
\label{sec:prelims}

Throughout we consider a {\em string} $S = S[1..n] = S[1]S[2]\ldots S[n]$ on an integer alphabet $\Sigma$ of $\sigma$ symbols. In this article we are mostly interested in strings on the DNA alphabet, i.e. when $\Sigma = \{ A, C, C, T \}$. The \emph{colexicographic order} of two strings is the same as the lexicographic order of their reverse strings. The {\em substring} of $S$ that starts at position $i$ and ends at position $j$, $j \ge i$, denoted $S[i..j]$, is the string $S[i]S[i+1]\ldots S[j]$. If $i > j$, then $S[i..j]$ is the empty string $\varepsilon$. A suffix of $S$ is a substring with ending position $j = n$, and a prefix is a substring with starting position $i = 1$. We use the term $k$-mer to refer to a (sub)string of length $k$.

The following two basic definitions relate to $k$-spectra.

\begin{definition}
($k$-spectrum). The $k$-spectrum of a string $T$, denoted with $S_k(T)$, is the set of all distinct $k$-mers of the string $T$.
\end{definition}

\begin{definition} \label{def:prefix_set}
($k$-prefix set). The $k$-prefix set of a string $T$ is defined as the left-padded set of prefixes $P_k(T) = \{\$^{k-i}T[1..i] \; | \; i = 0, \ldots, k-1\}$, where $\$$ is a special character not found in the alphabet, that is smaller than all characters of the alphabet.
\end{definition}
The $k$-spectrum of a \emph{set} of strings $T_1, \ldots T_m$, denoted with $S_k(T_1, \ldots T_m)$, is defined as the union of the $k$-spectra of the individual strings. For example, consider the strings {\sf AGGTAAA} and {\sf ACAGGTAGGAAAGGAAAGT}. The 4-spectrum is the set \{{\sf GAAA}, {\sf TAAA}, {\sf GGAA}, {\sf GTAA}, {\sf AGGA}, {\sf GGTA}, {\sf AAAG}, {\sf ACAG}, {\sf GTAG}, {\sf AAGG}, {\sf CAGG}, {\sf TAGG}, {\sf AAGT}, {\sf AGGT}\}. Likewise, the $k$-prefix set $P_k(T_1, \ldots T_m)$ is the union of the $k$-prefix sets of the individual strings. In this case, the 4-prefix set is \{{\sf \$\$\$\$}, {\sf \$\$\$A}, {\sf \$\$AG} {\sf \$AGG}, {\sf \$\$AC}, {\sf \$ACA}\}.

\begin{definition}
\label{def:k_source_node}
($k$-source set). The $k$-source set $R_k(K)$ of a set of $k$-mers $K$ is the set $R_k(K) = \{x \in K \; | \; \not\exists y \in K \textrm{ such that } y[2..k] = x[1..k-1]\}$
\end{definition}
In our running example, the 4-source set of the 4-spectrum has just the 4-mer \{{\sf ACAG}\}. The \emph{extended $k$-spectrum} is the union of the spectrum and the $k$-prefix set of the $k$-source set, plus the $k$-mer $\$^k$ that is always added to avoid some corner cases.
\begin{definition}
\label{def:extended_k_spectrum}
(Extended $k$-spectrum). The extended $k$-spectrum $S'_k(T_1, \ldots, T_m)$ of a set of strings $T_1, \ldots, T_m$ is the set $S_k(T_1, \ldots, T_m) \cup P_k(R_k(T_1, \ldots, T_m)) \cup \{\$^k\}$
\end{definition}
We are now ready to define the SBWT. The definition below corresponds to the multi-SBWT definition of Alanko et al.~\cite{ACDA2023}.

\begin{definition} (Spectral Burrows-Wheeler transform, SBWT)
    Let $\{T_1, \ldots, T_m\}$ be a set of strings from an alphabet $\Sigma$. Let $x_i$ be the colexicographically $i$-th element of the extended $k$-spectrum $S_k'(T_1, \ldots, T_m)$ of size $n$. The SBWT of order $k$ is a sequence $X_1, X_2, \ldots X_n$ of subsets of $\Sigma$. The set $X_i$ is the empty set if $i > 1 \textrm{ and } x_{i-1}[2..k] = x_{i}[2..k]$, otherwise $X_i = \{c \in \Sigma \; | \; x_j[2..k]c \in S_k'(T_1, \ldots, T_m)\}$
\end{definition}

\begin{figure*}[t]
    \begin{minipage}{0.69\linewidth}
    \strut\vspace*{-\baselineskip}\newline
    \centering
    \includegraphics[width=1\textwidth]{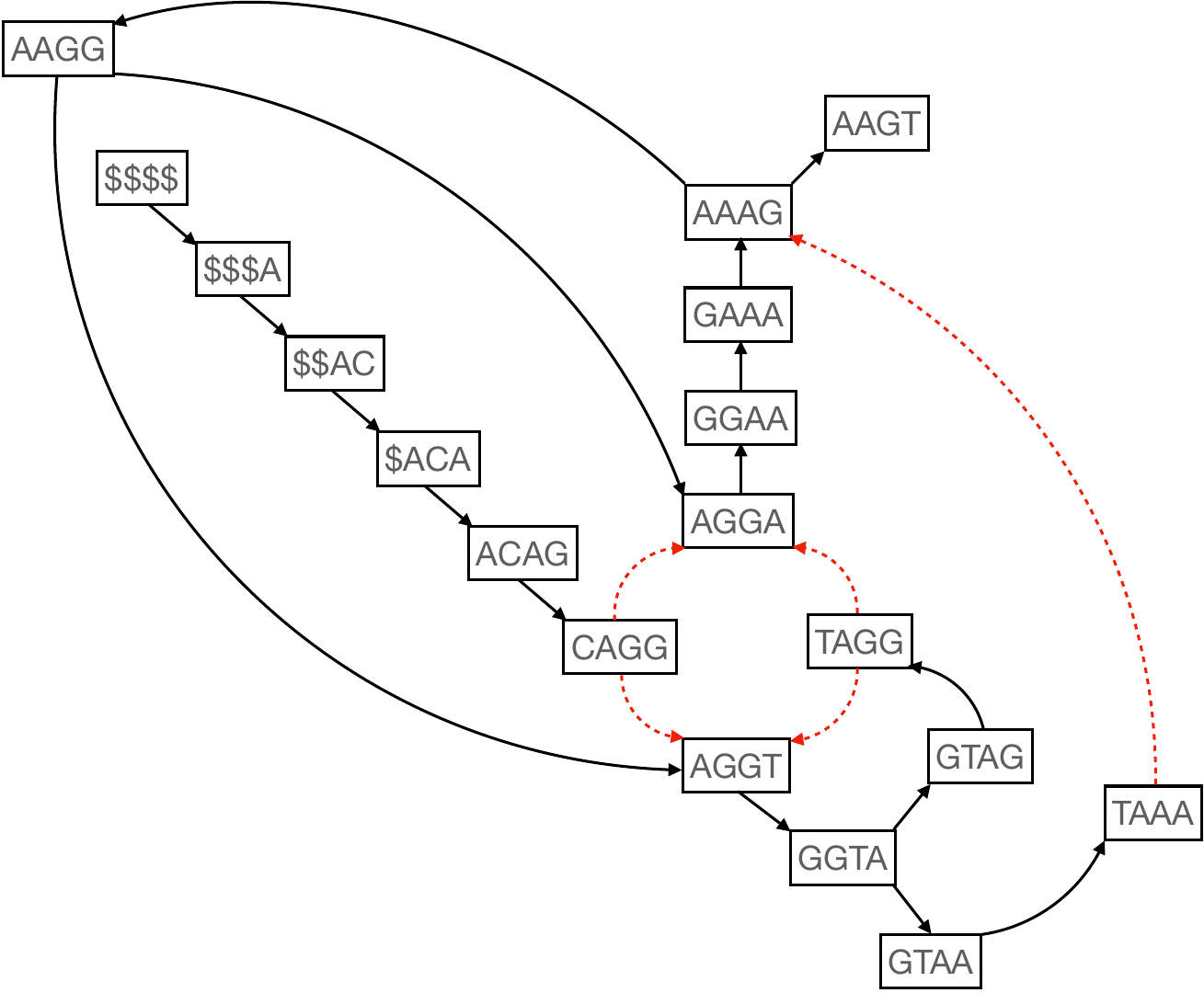}
    \end{minipage}
    \begin{minipage}{0.2\linewidth}
    \strut\vspace*{-\baselineskip}\newline
    \centering
        \resizebox{\textwidth}{!}{%

\begin{tabular}{lll}
\multicolumn{1}{c}{$k$-mers} & \multicolumn{1}{c}{{\em LCS}} & \multicolumn{1}{c}{{\em SBWT}} \\
\hline
{\sf \$\$\$\$} & -	& {\sf A}\\
{\sf \$\$\$A}	& 0	& {\sf C}\\
{\sf GAAA}	&1	& {\sf G}\\
{\sf TAAA}	&3	& {\sf -}\\
{\sf GGAA}	&2	& {\sf A}\\
{\sf GTAA}	&2	& {\sf A}\\
{\sf \$ACA}	&1	& {\sf G}\\
{\sf AGGA}	&1	& {\sf A}\\
{\sf GGTA}	&1	& {\sf A,G}\\
{\sf \$\$AC}	&0	& {\sf A}\\
{\sf AAAG}	&0	& {\sf G,T}\\
{\sf ACAG}	&2	& {\sf G}\\
{\sf GTAG}	&2	& {\sf G}\\
{\sf AAGG}	&1	& {\sf A,T}\\
{\sf CAGG}	&3	& {\sf -}\\
{\sf TAGG}	&3	& {\sf -}\\
{\sf AAGT}	&0	& {\sf -}\\
{\sf AGGT}	&2	& {\sf A}\\
\hline
\end{tabular}
}
    \end{minipage}
    \caption{Left: The de Bruijn graph (with $k=4$) of the set of two strings $\lbrace${\sf AGGTAAA}, {\sf ACAGGTAGGAAAGGAAAGT}\}. Red dashed edges are pruned from the graph because the node they point to can be reached from another (black) edge. Right: The extended $k$-spectrum of the input strings in colexicographical order, together with the longest common suffix (LCS) array and the spectral Burrows-Wheeler transform (SBWT).} 
    \label{fig:lcs-sbwt-example}
\end{figure*}
\noindent The sets in the SBWT represent the labels of outgoing edges in the node-centric de Bruijn graph of the input strings, such that we only include outgoing edges from $k$-mers that have a different suffix of length $k-1$ than the preceding $k$-mer in the colexicographically sorted list. Figure~\ref{fig:lcs-sbwt-example} illustrates the SBWT and the associated de Bruijn graph. The addition of the $k$-prefix set of the $k$-source set is a technical detail necessary to make the transformation invertible and searchable.

There are many ways to represent the subset sequence of the SBWT \cite{ABPV23,ACDA2023}. In this paper, we focus on the \emph{matrix representation}. This representation is currently the most practical version known for small alphabets, and it is used e.g. in the $k$-mer pseudoalignment tool Themisto \cite{AVMP23}.

\begin{definition}\label{def:PlainMatrix}
(Plain Matrix SBWT)
The plain matrix representation of the SBWT sequence is a binary matrix $M$ with $\sigma$ rows and $n$ columns. The value of $M[i][j]$ is set to $1$ iff subset $X_j$ includes the $i^{th}$ character in the alphabet.  
\end{definition}
Figure~\ref{fig:matrix} illustrates the matrix SBWT of our running example. Lastly, we define the central object of interest in this paper: the LCS array of an SBWT:

\begin{definition}\label{def:LCS-array}
(Longest common suffix array, $\LCS$ array)
Let $\{T_1, \ldots, T_m\}$ be a set of strings and let $x_i$ denote the colexicographically $i$-th $k$-mer of $S_k'(T_1, \ldots, T_m)$. The LCS array is an array of length $|S_k'(T_1, \ldots, T_m)|$ such that $LCS[1] = 0$ and for $i > 1$, the value of $LCS[i]$ is the length of the longest common suffix of $k$-mers $x_i$ and $x_{i-1}$. 
\end{definition}
In the definition above, the empty string is considered a common suffix of any two $k$-mers, so the longest common suffix is well-defined for any pair of $k$-mers. Figure~\ref{fig:lcs-sbwt-example} illustrates the LCS array of our running example.

\begin{figure*}[t]
    \centering
    \resizebox{\textwidth}{!}{%
\begin{tabular}{lcccccccccccccccccc}
&
\multicolumn{1}{c}{{\sf \$\$\$\$}}&
\multicolumn{1}{c}{{\sf \$\$\$A}}&
\multicolumn{1}{c}{{\sf GAAA}}&
\multicolumn{1}{c}{{\sf TAAA}}&
\multicolumn{1}{c}{{\sf GGAA}}&
\multicolumn{1}{c}{{\sf GTAA}}&
\multicolumn{1}{c}{{\sf \$ACA}}&
\multicolumn{1}{c}{{\sf AGGA}}&
\multicolumn{1}{c}{{\sf GGTA}}&
\multicolumn{1}{c}{{\sf \$\$AC}}&
\multicolumn{1}{c}{{\sf AAAG}}&
\multicolumn{1}{c}{{\sf ACAG}}&
\multicolumn{1}{c}{{\sf GTAG}}&
\multicolumn{1}{c}{{\sf AAGG}}&
\multicolumn{1}{c}{{\sf CAGG}}&
\multicolumn{1}{c}{{\sf TAGG}}&
\multicolumn{1}{c}{{\sf AAGT}}&
\multicolumn{1}{c}{{\sf AGGT}}
\\
\hline
{\sf A} & 1 & 0 & 0 & 0 & 1 & 1 & 0 & 1 & 1 & 1 & 0 & 0 & 0 & 1 & 0 & 0 & 0 & 1 \\
{\sf C} & 0 & 1 & 0 & 0 & 0 & 0 & 0 & 0 & 0 & 0 & 0 & 0 & 0 & 0 & 0 & 0 & 0 & 0 \\
{\sf G} & 0 & 0 & 1 & 0 & 0 & 0 & 1 & 0 & 1 & 0 & 1 & 1 & 1 & 0 & 0 & 0 & 0 & 0 \\
{\sf T} & 0 & 0 & 0 & 0 & 0 & 0 & 0 & 0 & 0 & 0 & 1 & 0 & 0 & 1 & 0 & 0 & 0 & 0 \\
\hline
\end{tabular}}
    \caption{The binary matrix representation of the spectral Burrows-Wheeler transform (SBWT).} 
    \label{fig:matrix}
\end{figure*}

\section{Basic $O(nk)$-time $\LCS$ array construction}
\label{sec:basic}

Before describing how to compute the $\LCS$ array we are going to explain how the whole $k$-spectrum can be recovered from the SBWT. We can reconstruct the full $k$-spectrum from the binary matrix representation $M$ of the SBWT  with $\sigma$ rows and $n$ columns and the cumulative array $C$ included in the SBWT.
Since $k$-mers are colexicographically sorted, they are assembled back to front. First, the last character of each $k$-mer is retrieved based on the $C$ array. These last added characters will be accessed later and are then stored in an array $L$. In accordance with the LF mapping property, which holds also for the SBWT, the previous character of each $k$-mer is recursively retrieved until reaching length $k$ as follows: First, at each iteration, a copy $C'$ of the $C$ array is saved and the vector $P$ for storing the last propagated characters is initialised with a dollar symbol.
Then, each column $i$ of $M$ is scanned. If $M[c, i]$ is 1, the first free position of the $c$ block marked by the $C'$ array in $P$ is set to the character in $L[i]$. Since we are scanning every column in $M$, we do not need to issue rank queries, but it is instead sufficient to increase the counter $C'[c]$ by one. At the end of each iteration, the newly propagated characters are copied to $L$. Considering the de Brujin graph of the SBWT, with this procedure edge labels are propagated one step forward in the graph.

Calculating the $\LCS$ array from the SBWT is similar to the procedure described above. The $\LCS$ array is initialised as an array of zeros and it is updated at each round of $M$ scanning by checking the mismatches between two adjacent newly propagated characters. Once an entry of the $\LCS$ array is updated, it is never modified again.
Since for each character of the $k$-mers we need to traverse all columns of $M$ once, the whole $k$-spectra can be retrieved in $O(n \sigma k)$-time, where $n$ is the number of $k$-mers in the SBWT. 
Instead of scanning $M$ $k$ times, we could traverse the Subset Wavelet Tree of the string (see~\cite{ABPV23}) and issue a binary rank operation for every character in each subset. Repeating this for each $k$-mer character will result in the LCS construction in time $O(n k \log\sigma)$.  This reduces to $O(n k)$ assuming a constant $\sigma$. Computing the $\LCS$ array does not alter this time complexity.

\begin{algorithm}[!h]
\begin{algorithmic}
\small
\State $LCS \gets $ Array of length $n$ initialized to $0$
\State $mismatches \gets $ Array of length $n$ initialized to $0$ \Comment{positions set in $\LCS$}
\State $L \gets $ Array of length $n$, with $\sigma +1$ characters, initialized according to $C$
\For{$round =0 \ldots k-1$}
    
    \For{$ i = 1 \ldots n-1$} \Comment{LCS[1]=0 by definition}
        \If{$mismatches[i+1]=0$ \textbf{and} $L[i+1] \neq L[i]$} 
            \State $mismatches[i+1] \gets 1$
            \State $LCS[i+1] \gets round$ \Comment{store the longest match length}
        \EndIf
    \EndFor
    
    \State $P \gets $ Array of length $n$ initialized to $\$$
    \State $C' \gets$ copy of the $C$ array
    \For{$i =1 \ldots n$}
        \For{$c \in \Sigma$}
        \If{$M[c,i] = 1$}
        \State $C'[c]\gets C'[c] + 1$ 
        \State $P[C'[c]] \gets L[i]$
        \EndIf
        \EndFor
    \EndFor

    \State $L \gets P$
\EndFor
\State \Return $\LCS$
\end{algorithmic}
\caption{Basic $\LCS$ array construction in $O(nk)$ time. \\ 
\textbf{Input}: SBWT matrix $M$ with $n$ columns and $\sigma$ rows, $\Sigma = \{1,\ldots,\sigma\}$ and $C$ array. \\ %
\textbf{Output}: $k$-bounded $\LCS$ array.}
\label{alg:subsetWT_rank_query}
\end{algorithm}

\section{Faster construction via super-alphabet techniques}
\label{sec:superalpha-lol}

\begin{figure*}[t]
    \centering
    \resizebox{\textwidth}{!}{%
\begin{tabular}{llccccccccccccccccccccccccccccccccccc}
\hline
 & & {\scriptsize 0} & {\scriptsize 1} & {\scriptsize 2} & {\scriptsize 3} & {\scriptsize 4} & {\scriptsize 5} & {\scriptsize 6} & {\scriptsize 7} & {\scriptsize 8} & {\scriptsize 9} & {\scriptsize 10} & {\scriptsize 11} & {\scriptsize 12} & {\scriptsize 13} & {\scriptsize 14} & {\scriptsize 15} & {\scriptsize 16} & {\scriptsize 17} & {\scriptsize 18} & {\scriptsize 19} & {\scriptsize 20} & {\scriptsize 21} & {\scriptsize 22} & {\scriptsize 23} & {\scriptsize 24} & {\scriptsize 25} & {\scriptsize 26} & {\scriptsize 27} & {\scriptsize 28} & {\scriptsize 29} & {\scriptsize 30} & {\scriptsize 31} & {\scriptsize 32} & {\scriptsize 33} & {\scriptsize 34} \\
{\em V} & & {\sf A} & {\sf C} & {\sf G} & {\sf A} & {\sf A} & {\sf G} & {\sf A} & {\sf A} & {\sf G} & {\sf A} & {\sf A} & {\sf G} & {\sf T} & {\sf G} & {\sf G} & {\sf A} & {\sf T} & {\sf A} \\
{\em B} & & 1 & 0 & 1 & 0 & 1 & 0 & 1 & 1 & 0 & 1 & 0 & 1 & 0 & 1 & 0 & 1 & 0 & 0 & 1 & 0 & 1 & 0 & 0 & 1 & 0 & 1 & 0 & 1 & 0 & 0 & 1 & 1 & 1 & 1 & 0 \\
\hline
\end{tabular}
} 
    \caption{The concatenated representation of the spectral Burrows-Wheeler transform (SBWT) used by the super-alphabet-based LCS construction algorithm.} 
    \label{fig:matrix}
\end{figure*}

The super-alphabet techniques described here are based on first decoding a $c$-symbol suffix of each $k$-mer using the previous algorithm in $O(cn)$ time and subsequently computing the remaining information in $O(1 + (k-c)/c)$ rounds and $O(cn + (k-c)n/c)$ time overall with $O(n)$ extra space.
Given $c=2$, the algorithm first replicates the basic one up to the computation of the last 2 characters of each $k$-mer as well as their $\LCS$ values. At this point, the $2$ last symbols of the $i^{th}$ $k$-mer, $P[i]$ and $L[i]$, are combined to create a super-character (or meta-character) $P[i]\cdot L[i]$ which is stored in $L[i]$. 
A new C array is then generated from the alphabet of super-characters. The following super-characters for each $k$-mer are then retrieved as in the basic algorithm. The only difference is that in the present case, the algorithm uses the concatenated representation of the SBWT of super-characters instead of the plain matrix representation.   
The concatenated representation of the SBWT sequence\footnote{A similar but different structure is described in~\cite{ACDA2023}.} consists of a concatenation of the subsets characters, stored in a vector $V$, and an encoding of the subsets sizes stored in a bitvector $B$. In further detail, let $S(X_i)$ be the concatenation of characters in the subset $X_i$, then $V=S(X_i)\cdot S(X_2)\cdot S(X_n)$. No symbol will be stored in $V$ if $X_i$ is the empty set. The empty sets are represented in $B= 1\cdot 0^{|S(X_1)|}\cdot 1\cdot 0^{|S(X_2)|} \cdots 1\cdot 0^{|S(X_n)|}$. 
The concatenated representation of a $c$-super-alphabet, $V'$ and $B'$, can be obtained from $V$ and $B$, the concatenated representation of the $c/2$-(super-)alphabet. 
$V'$ is filled in, scanning $V$, with $V[j]$ where $0\geq j \leq |V|$ concatenated with the characters in the subset $X$ marked by the $C$ array entry of $V[j]$ in $V$. For each character in $V$, $1\cdot0^{|X|}$ is appended to $B'$. No rank nor select queries are necessary as it is sufficient to update a copy of the $C$ array.
Considering the de Brujin graph of the SBWT, to create a super-concatenated representation edge labels are propagated one step backward in the graph.

Similarly to the basic algorithm
, the preceding super-character of each $k$-mer is recursively retrieved until reaching length $k$ as follows: First, at each iteration, a copy of the super $C$ array is stored and $P$ is initialised with the smallest super-character $\$^{c}$. Then $V'$ is scanned keeping track of the number of subsets encountered with a counter $v$ which is increased by $1$ if $B[i+v]=1$. 
If $B[i+v] = 0$, $L[i]$ is assigned to $P$ at the index corresponding to the position of the $V'[i]$ super-character block marked by the $C$ array.
As for the basic alphabet, since every subset is inspected in order, there is no need to issue rank queries, but it is instead sufficient to increase the copied $C'$ counter for $V'[i]$ by one. At the end of each iteration, the newly propagated super-characters are stored in $L$. Since we are skipping nodes in the graph, the iteration number $r$ goes from $c$ to at most $k+c-1$ with steps of size $c$.

The $\LCS$ array using super-characters is computed by checking first the presence of mismatches in the rightmost single characters with an appropriate mask and only if no mismatch is found, subsequent characters are checked. The $\LCS$ is then updated accordingly. Given a super-character with $c=2$ at index $i$ as $c2\cdot c1$, the algorithm compares first $c1[i]$ and $c1[i-1]$. In the presence of a mismatch $\LCS$ would be updated to the iteration number $r$. If $c2[i]\neq c2[i-1]$, $\LCS[i]= r + 1$ since 1 is, in this case, the number of matches found in the characters of the super-character. If on the contrary, $c2[i]=c2[i-1]$, the $\LCS$ could not be updated yet. The algorithm never checks more characters than necessary as it stops at the first encountered mismatch.

\section{Construction in linear time}
\label{sec:linear}
Our linear-time algorithm can be seen as a generalization of the linear-time LCP algorithm of Beller et al. \cite{beller2013computing} from the regular BWT to the SBWT. When the input is the spectrum of a single string and $k$ approaches $n$, the SBWT coincides with the BWT of the reverse of the input\footnote{Assuming the input to the BWT is terminated with a \$-symbol, and there is an added \$-edge from the last $k$-mer of the input to the root of the SBWT graph.}, and both algorithms perform the same iteration steps.

The algorithm fills in the LCS in increasing order of the values. The main loop has $k$ iterations, such that iteration $i$ fills in LCS values that are equal to $i-1$. Values that are not yet computed are denoted with $\bot$. 

We denote the \emph{colexicographic interval} of string $\alpha$ with $[\ell, r]_\alpha$, where $\ell$ and $r$ respectively are the colexicographic ranks of the smallest and largest $k$-mer in the SBWT that have $\alpha$ as a suffix. The right extensions of interval $[\ell, r]_\alpha$, denoted with EnumerateRight($\ell, r$), are those characters $c$ such that $\alpha c$ is a suffix of at least one $k$-mer in the SBWT. The interval of right extension $c$ from $[\ell, r]_\alpha$, denoted with ExtendRight($\ell, r, c$), can be computed using the formula $[2 + C[c] + rank_c(\ell-1), 1 + C[c] + rank_c(r)]_{\alpha c}$, where the rank is over the subset sequence of the SBWT \cite{ACDA2023}, and $C[c]$ is the number of characters in the SBWT that are smaller than $c$.

The input to iteration $i$ is a list of colexicographic intervals of substrings of length $i-1$. For each interval $[\ell, r]_{\alpha}$ in the list, the algorithm computes all right-extensions $[\ell', r']_{\alpha c}$. If $LCS[r'+1]$ is not yet filled yet, the algorithm sets $LCS[r'+1] = i-1$ and adds $[\ell', r']_{\alpha c}$ to the list of intervals for the next round. Otherwise, $LCS[r'+1]$ is not modified and interval $[\ell', r']_{\alpha c}$ is not added to the next round. Algorithm \ref{alg:linear} lists the pseudocode. The algorithm is designed so that at the end, every value of the LCS array has been computed.

\begin{algorithm}[th]
\begin{algorithmic}[1]
\small
\State $LCS \gets $ Array of length $n$ initialized to $\bot$
\State $LCS[1] \gets 0$ \Comment{By definition.}
\State $I \gets $ $([1, n])$ \Comment{List of intervals for current round.}
\State $I' \gets $ $([1, 1])$\Comment{List of intervals for the next round. Here interval of \$}
\For{$i = 1..k$} 
\While{$|I| > 0$}
\State $[\ell, r] \gets $ Pop $I$
\For{$c \in$ EnumerateRight($\ell, r$)}
    \State $[\ell', r'] \gets $ ExtendRight($\ell, r, c$)
    \If{$r' < n $ and $ LCS[r'+1] = \bot$}
        \State $LCS[r'+1] \gets i-1$ \label{algline:new_LCS_value}
        \State Push $[\ell', r']$ to $I'$
    \EndIf
\EndFor
\EndWhile
\State $I \gets I'$
\State $I' \gets$ Empty list
\EndFor
\State \Return LCS
\end{algorithmic}
\caption{Construction in $O(n\log \sigma)$ time. \\ 
\textbf{Input}: SBWT with support for EnumerateRight and ExtendRight. \\
\textbf{Output}: $k$-bounded $\LCS$ array.}
\label{alg:linear}
\end{algorithm}

\subsection{Correctness}\label{sec:linear_correctness}

To prove the correctness of the algorithm, we introduce the concept of an \emph{L-interval}. A colexicographic interval $[\ell, r]_\alpha$ is called an L-interval iff it is the longest colexicographic interval of a string with interval endpoint $r$. In case there are multiple strings with the same interval $[\ell, r]$, then the $\alpha$ in the subscript of the notation is the shortest string with this interval. The number of L-intervals is clearly $O(n)$ because each L-interval has a distinct endpoint. LCS array can be derived from the L-intervals as follows:

\begin{lemma}\label{lemma:L_interval_to_lcs}
    If $[\ell, r]_{c \alpha}$ is an L-interval, with $\alpha \in \Sigma^*$ and $c \in \Sigma$, then $LCS[r+1] = |\alpha|$
\end{lemma}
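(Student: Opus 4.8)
The plan is to translate the combinatorial ``longest interval'' condition into the structural statement that the $k$-mer $x_{r+1}$ immediately following the interval still ends with $\alpha$ but no longer ends with $c\alpha$; the equality $\LCS[r+1] = |\alpha|$ then falls out of the definition of colexicographic order. Throughout I would write $[\ell_\alpha, r_\alpha]_\alpha$ for the colexicographic interval of the shorter suffix $\alpha$, and use that, since $\alpha$ is a suffix of $c\alpha$, we have the nesting $[\ell, r] \subseteq [\ell_\alpha, r_\alpha]$.

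The first and main step is to show that $r_\alpha > r$. By the definition of an L-interval, the subscript $c\alpha$ is the \emph{shortest} string whose colexicographic interval equals $[\ell, r]$; in particular the strictly shorter suffix $\alpha$ has a different interval, so the nesting above is strict, i.e. $\ell_\alpha < \ell$ or $r_\alpha > r$. I would then argue by contradiction: suppose $r_\alpha = r$. Then $\alpha$ is a string strictly shorter than $c\alpha$ whose interval $[\ell_\alpha, r]$ shares the endpoint $r$ but is strictly wider (because $\ell_\alpha < \ell$). This contradicts the maximality built into the L-interval definition, namely that $[\ell, r]_{c\alpha}$ is the longest (widest) colexicographic interval with right endpoint $r$. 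Hence $r_\alpha > r$, and since $r_\alpha \le n$ this also guarantees $r + 1 \le n$, so that $x_{r+1}$ is well defined.

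With $r_\alpha > r$ in hand, the rest is a direct appeal to colexicographic order, which groups all $k$-mers ending in $\alpha$ into the contiguous block $[\ell_\alpha, r_\alpha]$, sub-ordered by the symbol immediately preceding $\alpha$. Since $\ell_\alpha \le \ell \le r < r+1 \le r_\alpha$, the $k$-mer $x_{r+1}$ lies in this block and therefore ends with $\alpha$; together with $x_r$, which ends in $c\alpha$ and a fortiori in $\alpha$, this yields the lower bound $\LCS[r+1] \ge |\alpha|$. For the matching upper bound, the interval of $c\alpha$ contains \emph{every} $k$-mer ending in $c\alpha$ and equals $[\ell, r]$; as $r+1 > r$, the $k$-mer $x_{r+1}$ does not end in $c\alpha$. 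Thus the symbol preceding the $\alpha$-suffix of $x_{r+1}$ is some $c' \neq c$, whereas the corresponding symbol of $x_r$ is $c$, so the longest common suffix of $x_r$ and $x_{r+1}$ is exactly $\alpha$ and $\LCS[r+1] = |\alpha|$.

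The step I expect to be the crux is the first one: extracting $r_\alpha > r$ from the L-interval definition. This is precisely where ``longest interval / shortest representative string'' does the real work, ruling out the degenerate situation in which $x_{r+1}$ falls outside the $\alpha$-block and the common suffix would be shorter than $\alpha$. Once that case is excluded, both bounds on $\LCS[r+1]$ are immediate consequences of how colexicographic order sorts $k$-mers sharing a suffix. The only boundary conditions needing care are that $\alpha$ is a proper suffix of each $k$-mer, which holds because $|c\alpha| \le k$ forces $|\alpha| \le k-1$ so the symbol preceding $\alpha$ always exists, and that $x_{r+1}$ exists, which the inequality $r_\alpha > r$ already secures.
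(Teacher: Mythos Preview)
Your proof is correct and follows essentially the same route as the paper's: an upper bound from the fact that $x_{r+1}$ lies outside the $c\alpha$-interval, and a lower bound obtained by ruling out $r_\alpha = r$ via the L-interval definition. Your treatment is in fact more careful than the paper's terse version---you explicitly separate the two ways $r_\alpha = r$ could fail (either $\ell_\alpha = \ell$, contradicting shortest-representative, or $\ell_\alpha < \ell$, contradicting maximal width), and you note that $r_\alpha > r$ automatically guarantees $r+1 \le n$, a boundary point the paper leaves implicit.
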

\begin{proof}
    It must be that $LCS[r+1] < |c \alpha|$ because otherwise the $k$-mer with colexicographic rank $r+1$ should have been included in the interval $[\ell, r]_{c \alpha}$. It must be that $LCS[r+1] \geq |\alpha|$ because otherwise the interval of $\alpha$ also has endpoint $r$, which means that $c \alpha$ is not the shortest string with interval ending at $r$, contradicting the initial assumption.
\end{proof}
The L-intervals form a tree, where the children of $[\ell, r]_\alpha$ are the single-character right-extensions $[\ell', r']_{\alpha c}$ that are L-intervals. The Lemma below implies that every L-interval is reachable by right extensions by traversing only L-intervals from the interval of the empty string:

\begin{lemma}\label{lemma:L_interval_tree}
Let $\alpha c$ be a substring of the input such that $\alpha \in \Sigma^*$ and $c \in \Sigma$. If $[\ell, r]_{\alpha c}$ denotes an L-interval, then $[\ell', r']_{\alpha}$ is an L-interval.
\end{lemma}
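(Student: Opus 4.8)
The plan is to argue by contradiction, using the single structural fact that the right endpoint of a right extension is a function of the parent's right endpoint alone. First I would pin down the precise shape of an L-interval that both later steps rely on. For a fixed right endpoint $r$, the colexicographic intervals of the suffixes of $x_r$ that still end at $r$ are nested and shrink monotonically as the suffix grows longer (passing to a shorter suffix only enlarges the set of $k$-mers sharing it), so the \emph{widest} interval with right endpoint $r$ is attained by the \emph{shortest} suffix of $x_r$ whose interval still ends at $r$. Hence ``$[\ell,r]_\beta$ is an L-interval'' is equivalent to ``$\beta$ is the shortest string whose colexicographic interval has right endpoint $r$''. This reconciles the ``longest interval'' and ``shortest string'' clauses of the definition and mirrors the reasoning already used in Lemma~\ref{lemma:L_interval_to_lcs}; I would state it cleanly first, since everything hangs on it.

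Second, I would isolate the one fact about right extension that drives the argument. By the $\mathrm{ExtendRight}$ formula, the right endpoint of $\mathrm{ExtendRight}(\ell,r,c)$ equals $1 + C[c] + rank_c(r)$, which depends only on $r$ and $c$, not on $\ell$. Consequently, any two colexicographic intervals that share a right endpoint $r'$ are sent by right extension with $c$ to two intervals that again share a common right endpoint.

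Third, for the main step I would assume $[\ell,r]_{\alpha c}$ is an L-interval while $[\ell',r']_\alpha$ is \emph{not}, and derive a contradiction. Since $\alpha$'s interval ends at $r'$ but fails to be an L-interval, the first step supplies a strictly shorter string $\gamma$ (the canonical string of the genuine L-interval at $r'$) whose interval also ends at $r'$, with $|\gamma|<|\alpha|$. Because $\gamma$ and $\alpha$ are both suffixes of $x_{r'}$ and $\gamma$ is the shorter, $\gamma$ is a suffix of $\alpha$; therefore $\gamma c$ is a suffix of $\alpha c$, and since $\alpha c$ is a substring of the input, so is $\gamma c$, giving it a nonempty interval. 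By the second step the interval of $\gamma c$ ends at $1 + C[c] + rank_c(r') = r$, exactly the right endpoint of $\alpha c$. But $|\gamma c| = |\gamma|+1 < |\alpha|+1 = |\alpha c|$, so $\gamma c$ is a string strictly shorter than $\alpha c$ whose interval also ends at $r$ --- contradicting, via the first step, that $\alpha c$ is the shortest string with right endpoint $r$, i.e. that $[\ell,r]_{\alpha c}$ is an L-interval.

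I expect the only real obstacle to be the first step: establishing the nesting/monotonicity of equal-endpoint intervals carefully enough to guarantee that a non-L-interval at $r'$ yields a strictly shorter competitor $\gamma$ that is simultaneously (i) a suffix of $\alpha$, so that $\gamma c$ stays a valid substring, and (ii) still ends at $r'$, so that the second step transports its endpoint to $r$. Once that is in hand, the transport-and-contradict argument is short and mechanical.
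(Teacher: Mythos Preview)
Your proposal is correct and follows essentially the same route as the paper: assume a shorter proper suffix $\gamma$ (the paper's $\beta$) of $\alpha$ witnesses that $[\ell',r']_\alpha$ is not an L-interval, right-extend by $c$, and contradict the L-interval status of $\alpha c$. Your version is slightly tidier in that you track only right endpoints via the observation that $r'' = 1+C[c]+rank_c(r')$ depends on $r'$ alone, whereas the paper also tracks the left endpoint and closes with a two-case split on $\ell'' = \ell$ versus $\ell'' < \ell$; the underlying argument is the same.
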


\begin{proof}
Suppose for a contradiction that the Lemma does not hold. Then there exists an L-interval interval $[x, r']_{\beta}$ with $x \leq \ell'$ such that $\beta$ is a proper suffix of $\alpha$. Then by the SBWT right extension formula, the interval $[\ell'', r'']_{\beta c}$ is such that $r'' = r$ and $\ell'' \leq \ell$. It can't be that $\ell'' = \ell$, or otherwise $\alpha c$ was not the shortest string with interval $[\ell,r]$, and it can't be that $\ell'' < \ell$ because then the starting point $\ell$ was not minimal for end point $r$. In both cases we have a contradiction, which proves the claim.
\end{proof}
We can now prove the correctness and the time complexity of the algorithm:

\begin{theorem}
    Given an SBWT having $n$ subsets of alphabet $\Sigma$ with $|\Sigma| = O(1)$, Algorithm \ref{alg:linear} correctly computes every value of the LCS array in time $O(n)$.
\end{theorem}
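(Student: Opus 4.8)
The plan is to prove the theorem in two parts: first correctness (every LCS value is computed exactly once and correctly), then the $O(n)$ time bound, both leveraging the L-interval machinery established in Lemmas~\ref{lemma:L_interval_to_lcs} and~\ref{lemma:L_interval_tree}. The key observation I would build the whole argument on is that the intervals processed by the algorithm across all rounds are precisely the L-intervals. By Lemma~\ref{lemma:L_interval_tree}, the L-intervals form a tree rooted at the interval of the empty string, where edges correspond to single-character right-extensions into L-intervals. Algorithm~\ref{alg:linear} performs a breadth-first traversal of exactly this tree: it starts with $[1,n]$ (the interval of $\varepsilon$), and from each popped interval $[\ell,r]_\alpha$ it enumerates right-extensions $[\ell',r']_{\alpha c}$, keeping an extension for the next round exactly when $LCS[r'+1]$ is still $\bot$.

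For correctness, I would argue by induction on the round number $i$ that at the start of round $i$, the list $I$ contains precisely the L-intervals $[\ell,r]_\alpha$ with $|\alpha| = i-1$, and that $LCS[r'+1]$ has already been set (to a value $< i-1$) for exactly those endpoints $r'$ whose governing L-interval is shorter. The crux is the \emph{first-visit} property: when a right-extension $[\ell',r']_{\alpha c}$ is generated, it is an L-interval iff $r'+1$ has not yet been claimed. Here is where Lemma~\ref{lemma:L_interval_to_lcs} does the real work: if $[\ell',r']_{\alpha c}$ is an L-interval then $\alpha c$ is the \emph{shortest} string with endpoint $r'$, so this is the \emph{first} round (in order of increasing length) in which any interval ending at $r'$ is produced, hence $LCS[r'+1] = \bot$ and the algorithm sets $LCS[r'+1] = |\alpha c| - 1 = |\alpha| = i - 1$, matching the lemma. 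Conversely, if $[\ell',r']_{\alpha c}$ is not an L-interval, then some shorter string already shares endpoint $r'$, so $LCS[r'+1]$ was filled in an earlier round and the test $LCS[r'+1]=\bot$ correctly rejects it. Because every L-interval is reachable from the root through L-intervals (Lemma~\ref{lemma:L_interval_tree}), the BFS reaches all of them, and since each distinct endpoint $r+1$ (for $r < n$) is the endpoint of exactly one L-interval, every entry $LCS[2..n]$ is assigned exactly once and with the correct value; $LCS[1]=0$ is set by definition.

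For the running time, the central counting argument is that the number of L-intervals is $O(n)$, since each has a distinct right endpoint. The algorithm pushes an interval to $I'$ only when it sets a previously-$\bot$ entry of $LCS$, so the total number of intervals ever enqueued is bounded by $n$. Each such interval triggers one call to \texttt{EnumerateRight} and, per enumerated character, one call to \texttt{ExtendRight}; with $|\Sigma| = O(1)$ this is $O(1)$ right-extension operations per interval. To make the per-operation cost $O(1)$ rather than $O(\log\sigma)$, I would invoke the constant-alphabet assumption to argue that the required \texttt{rank} queries on the matrix SBWT take constant time (e.g., via $\sigma$ rank-supported bitvectors, one per matrix row, each answering $rank_c$ in $O(1)$). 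Summing gives $O(n)$ total right-extension work plus $O(n)$ for array initialization and the bookkeeping of the interval lists, for $O(n)$ overall, independent of $k$.

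The step I expect to be the main obstacle is pinning down the first-visit/L-interval correspondence rigorously — specifically, proving that the set of intervals surviving the $LCS[r'+1]=\bot$ test is \emph{exactly} the set of L-intervals, with no spurious non-L-intervals slipping through and no L-interval being discarded. The forward direction (L-intervals survive) leans on the ``shortest string'' clause of the L-interval definition together with Lemma~\ref{lemma:L_interval_to_lcs}; the subtle direction is ruling out that a non-L-interval could ever pass the test, which requires showing that whenever $r'+1$ is first touched it is touched \emph{by} its L-interval. I would reduce this to the monotonicity of the construction in the length parameter: intervals of length $i-1$ are processed strictly before those of length $i$, so the shortest string with a given endpoint — i.e. the L-interval — is always the first to reach that endpoint, which is exactly what the $\bot$-test detects.
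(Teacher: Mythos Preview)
Your proposal is correct and follows essentially the same route as the paper: a breadth-first traversal of the L-interval tree, using Lemma~\ref{lemma:L_interval_tree} for reachability, Lemma~\ref{lemma:L_interval_to_lcs} for the value assigned, and the $O(n)$ count of L-intervals combined with the constant-alphabet assumption for the time bound. Your treatment of the ``first-visit'' direction (that an interval failing the $\bot$-test can safely be discarded because the shortest representative of the L-interval with that endpoint must already have been processed in an earlier round) is in fact spelled out more carefully than in the paper, which handles that case in a single sentence.
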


\begin{proof}
The algorithm traverses the L-interval tree in breadth-first order by right-extending from the empty string and visiting the shortest string representing each L-interval. Whenever the algorithm comes across an interval $[\ell', r']$ such that $LCS[r'+1]$ is already set, we know that endpoint $r'$ has already been visited before with a string shorter than the current string, so either $[\ell', r']$ is not an L-interval or the current string is not the shortest representative of it, so we can ignore it. By Lemma~\ref{lemma:L_interval_tree}, the shortest representative string of every L-interval is reachable this way. There is guaranteed to be an L-interval for every endpoint $r$ because there is at least a singleton colexicographic interval to every endpoint. Therefore, every value of the LCS array is eventually computed, and by Lemma~\ref{lemma:L_interval_to_lcs}, every computed value is correct. Since the number of L-intervals is $O(n)$, and EnumerateRight and ExtendRight can be implemented in constant time for a constant-sized alphabet, the total time is $O(n)$.
\end{proof}
For small alphabets, the call to EnumerateRight can be replaced by a process that tries all $\sigma$ possible right extensions. In this case, it is enough to track only interval endpoints, halving the space and number of rank queries required.

\section{Experimental Evaluation}
\label{sec:experiments}

\paragraph{Experimental Setup.} All our experiments were conducted on a machine with four 2.10\,GHz Intel Xeon E7-4830 v3 CPUs with 12 cores each for a total of 48 cores, 30\,MiB L3 cache, 1.5\,TiB of main memory, and a 12\,TiB serial ATA hard disk.
The OS was Linux (Ubuntu 18.04.5 LTS) running kernel 5.4.0-58-generic. The compiler was \texttt{g++} version 10.3.0 and the relevant compiler flags were \texttt{-O3} and \texttt{-DNDEBUG} (\texttt{-march=native} was not used). All runtimes were recorded by instrumenting the code with calls to \texttt{ std::chrono}. The peak memory (RSS) was measured using the \texttt{getrusage} Linux system call. C++ source code of the implementations tested is available upon request from the authors.

\paragraph{Datasets.} We experiment on three data sets representing different types of sequencing data found in genomics applications:
\begin{enumerate} 
\item A pangenome of 3682 E. coli genomes. The data was downloaded during the year 2020 by selecting a subset of 3682 assemblies listed in \url{ftp://ftp.ncbi.nlm.nih.gov/genomes/genbank/bacteria/assembly_summary.txt} with the organism name ``Escherichia coli'' with date before March 22, 2016. The resulting collection is available at \url{zenodo.org/record/6577997}. It contains 745,409 sequences of a total length 18,957,578,183.
\item The human reference genome version GRCh38.p14, available at \url{https://www.ncbi.nlm.nih.gov/assembly/GCF_000001405.40}. It contains 705 sequences of total length 3,298,430,636.
\item A set of 34,673,774 paired-end Illumina HiSeq 2500 reads each of length 251 sampled from the human gut (SRA identifier ERR5035349) in a study on irritable bowel syndrome and bile acid malabsorption~\cite{jeffery2020differences}. The total length of this data set is 8,703,117,274 bases.
\end{enumerate}
We focus solely on genomic data as that is currently the main application of the SBWT. 
The constructed index structures include both forward and reverse DNA strands. We experiment with values $k = 16,32,48,64,80,96,112,128$ and $255$. For the metagenomic reads, the maximum value used was 251 since this is the length of the reads.
Fig.~\ref{fig:kmers} shows a plot of the number of distinct $k$-mers for varying $k$.

\paragraph{Algorithms.} The basic and linear algorithms are implemented on top of the matrix representation of the SBWT. In the linear algorithm, we apply the observation mentioned at the end of Section~\ref{sec:linear_correctness} and only track interval end points.

The super-alphabet algorithm (labelled {\sf SA-2} in the plots) first constructs the concatenated representation from the matrix representation and operates on it alone after the initial round of alphabet expansion. We experimented only with a super-alphabet of size 2, and leave a more detailed exploration, including larger super-alphabets, for future work.

\paragraph{Results.} Fig.~\ref{fig:runtime} shows on the top the runtime of each algorithm as a function of the $k$-mer size for each of the three data sets. We observe that the super-alphabet algorithm is consistently faster than the basic and linear algorithms until $k$ reaches 128, after which the linear algorithm is clearly fastest --- roughly three times faster than the basic algorithm on the E.coli dataset.

Memory usage for the algorithms is displayed at the bottom of Fig.~\ref{fig:runtime}. The super-alphabet algorithm uses significantly more memory than the other two, which is partly attributable to its use of the concatenated representation of the SBWT, which it must first build from the matrix representation, increasing peak memory. Moreoever, it uses a larger data type to hold the current column of the SBWT matrix (a 16-bit word per element instead of an 8-bit one used in the basic algorithm). 
In comparison, the basic and linear implementations use startingly little memory, which may make them preferable on systems where memory is scarce.


\begin{figure*}[th]
    \centering
    \includegraphics[width=0.450\textwidth]{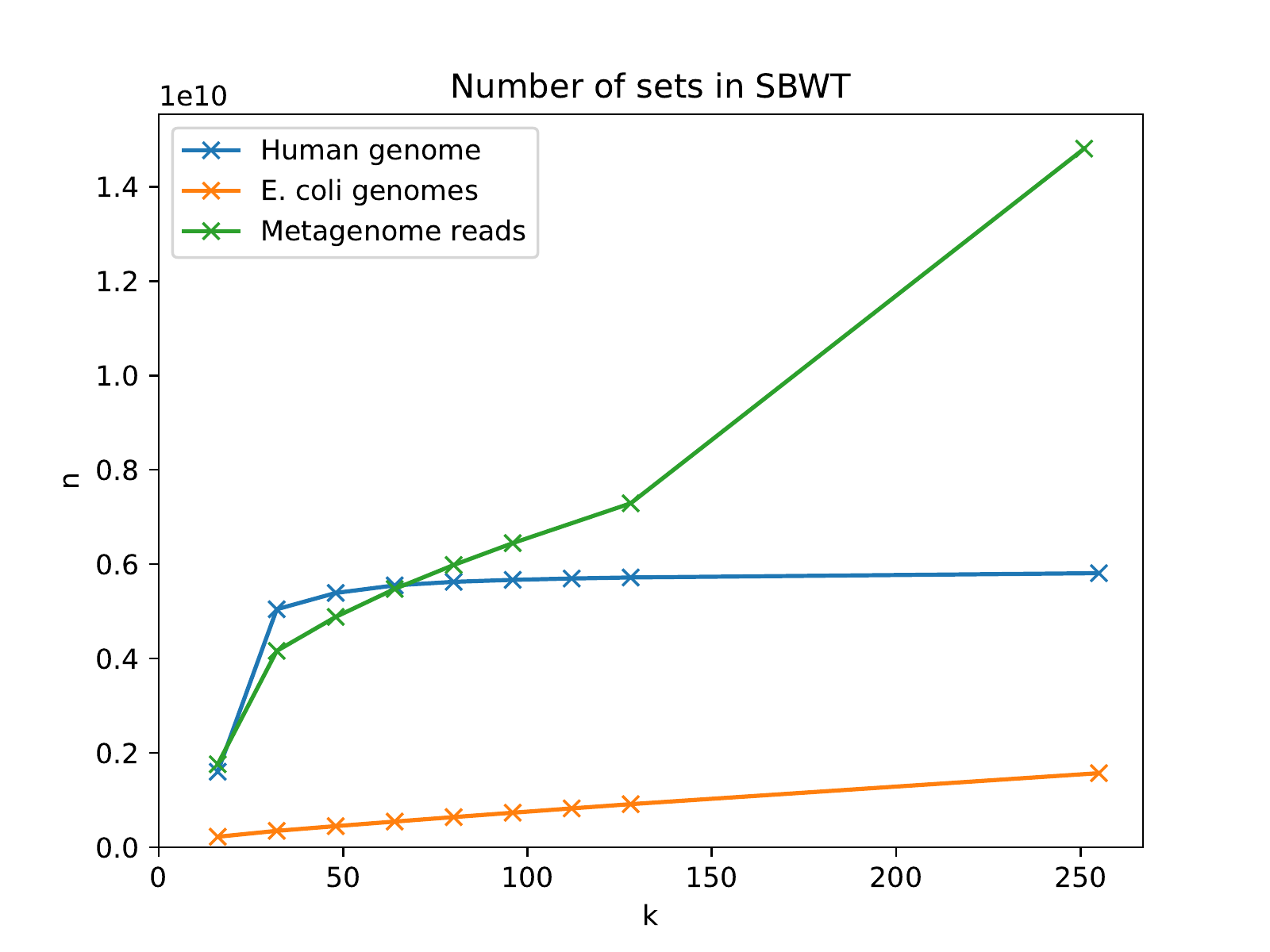}
    \caption{The number of sets in the SBWT (approximately equal to the number of $k$-mers) in each dataset for various $k$ used in our experiments.} 
    \label{fig:kmers}
\end{figure*}



\begin{figure*}[th]
\begin{minipage}[b]{0.32\linewidth}
\centering
\includegraphics[width=\textwidth]{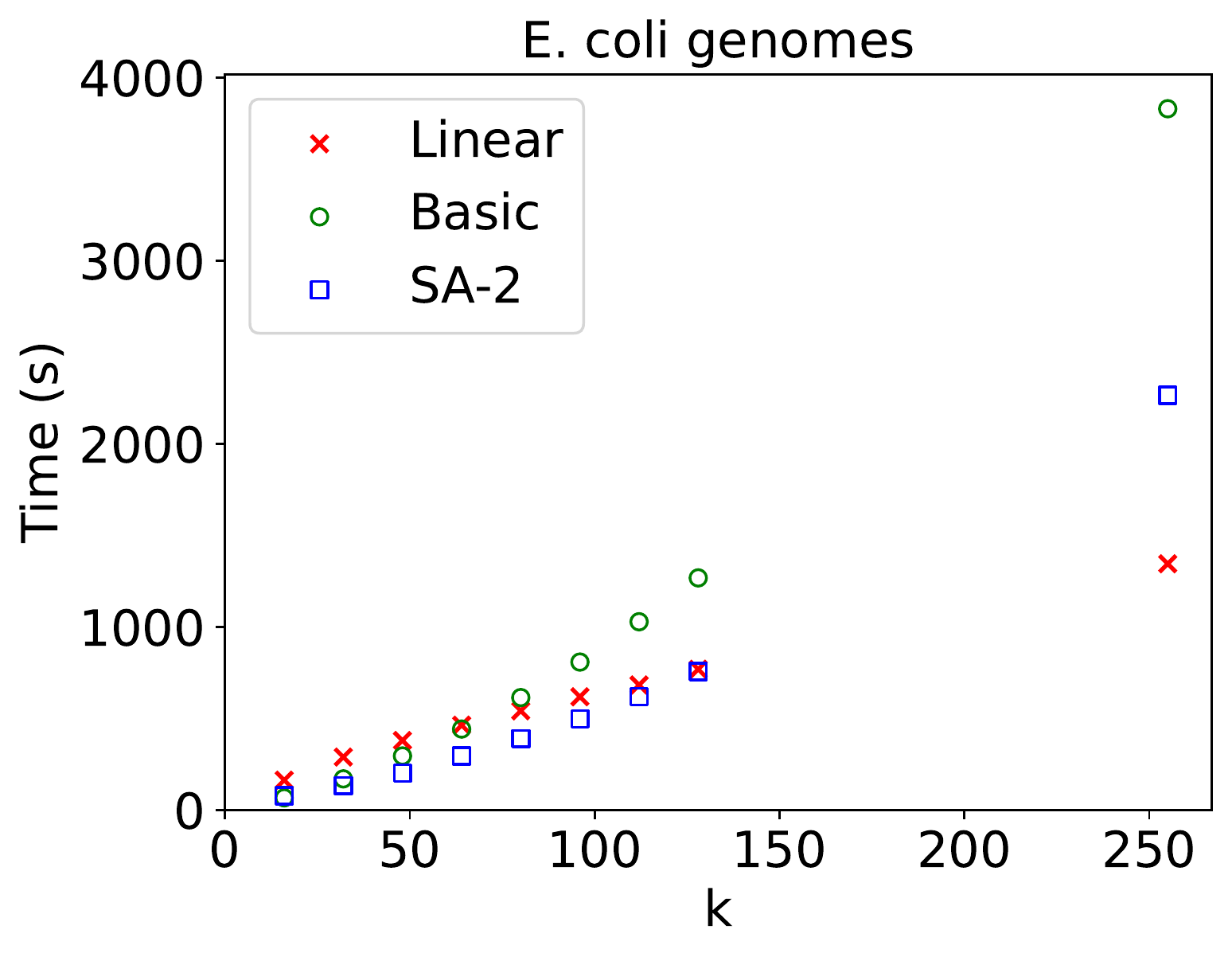}
\end{minipage}
\begin{minipage}[b]{0.32\linewidth}
\centering
\includegraphics[width=\textwidth]{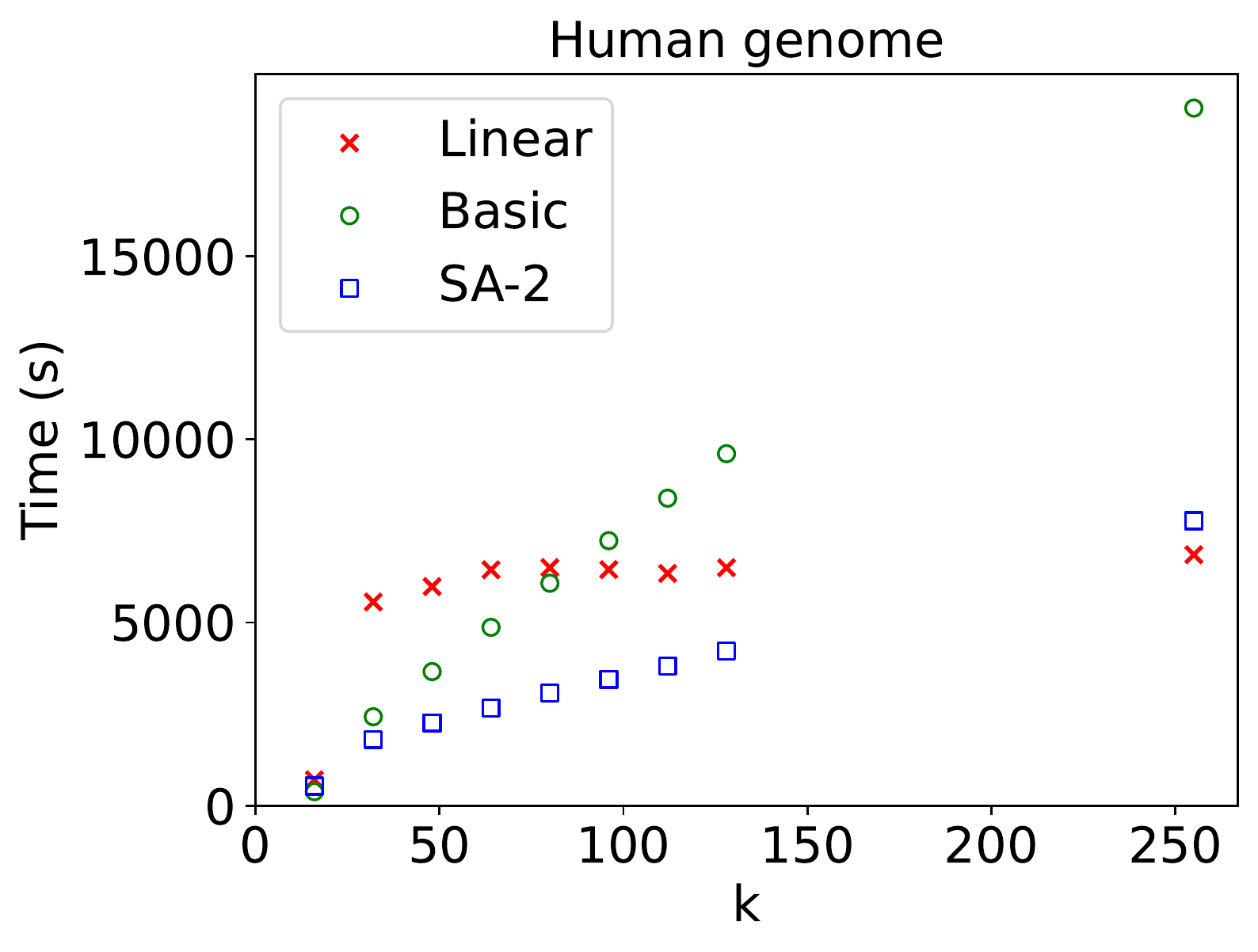}
\end{minipage}
\begin{minipage}[b]{0.32\linewidth}
\centering
\includegraphics[width=\textwidth]{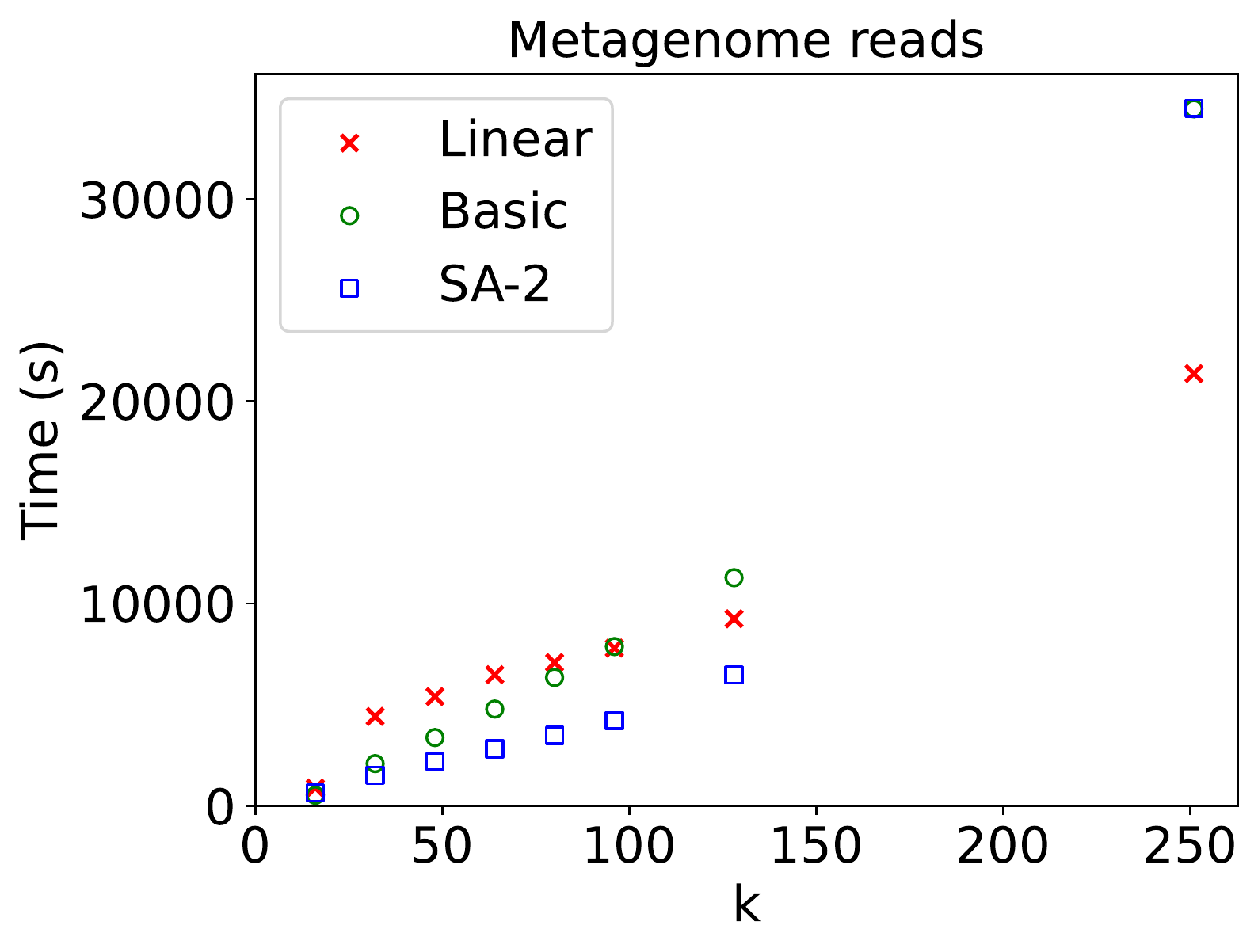}
\end{minipage}

\begin{minipage}[b]{0.32\linewidth}
\centering
\includegraphics[width=\textwidth]{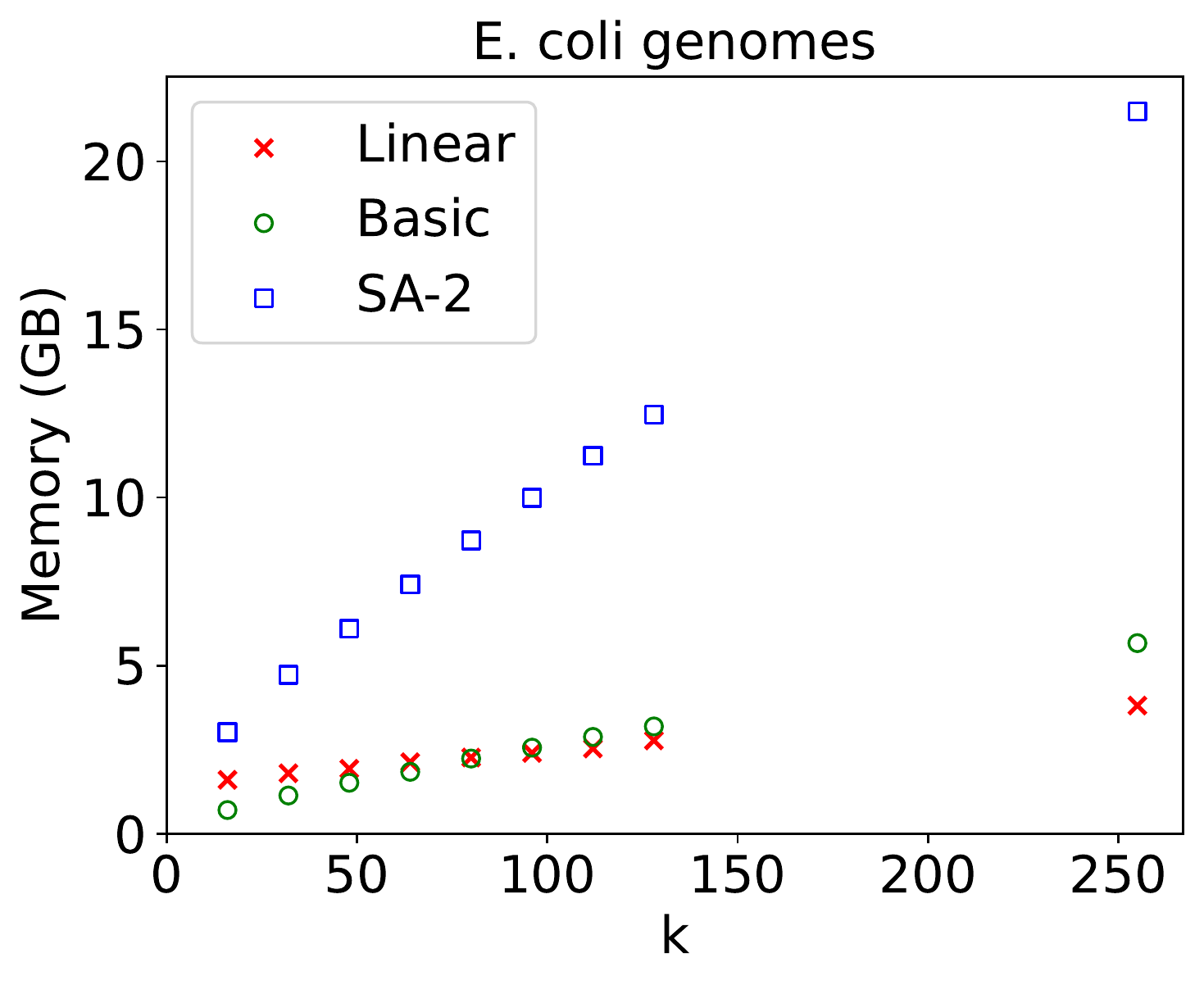}
\end{minipage}
\begin{minipage}[b]{0.32\linewidth}
\centering
\includegraphics[width=\textwidth]{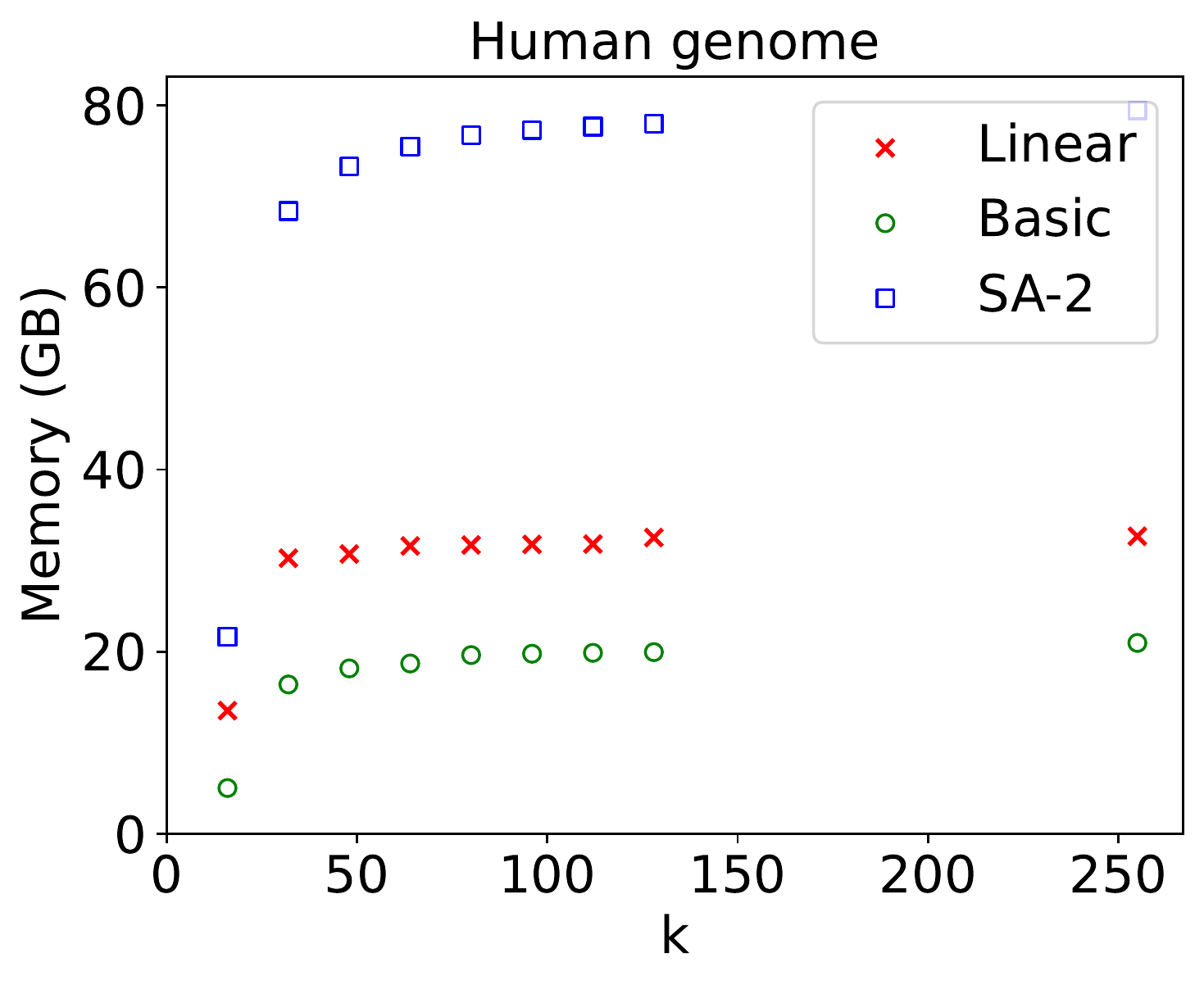}
\end{minipage}
\begin{minipage}[b]{0.32\linewidth}
\centering
\includegraphics[width=\textwidth]{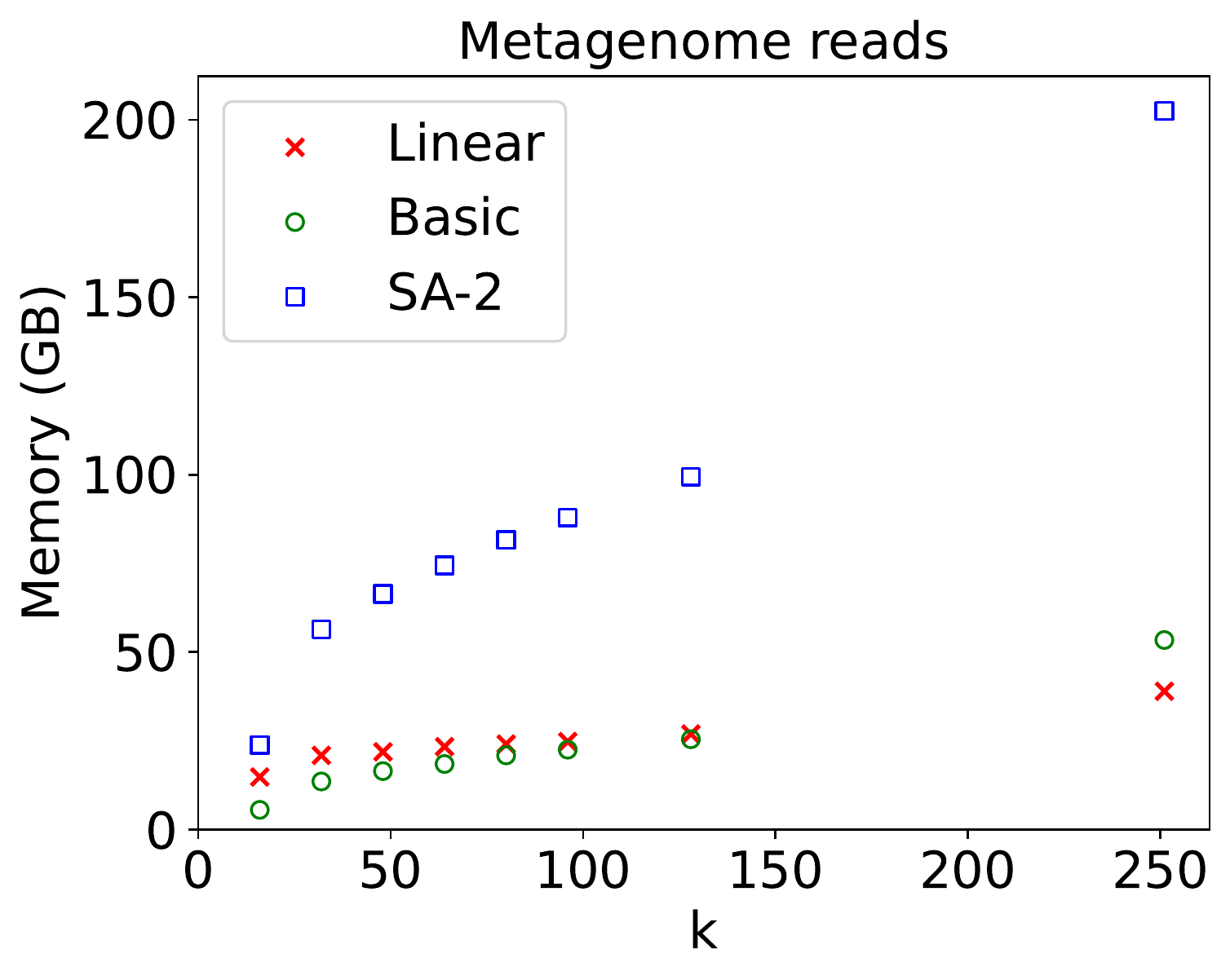}
\end{minipage}

    \caption{Runtime and memory usage of LCS array construction algorithms versus $k$.} 
    \label{fig:runtime}
\end{figure*}


\section{Concluding Remarks}

We have explored the design space of longest common suffix array construction algorithms for $k$-spectra. In particular, we have described two algorithms that, on real genomic datasets, significantly outperform our baseline $O(nk)$-time, $O(n)$ space approach. The first exploits the smaller nucleotide alphabet to form metacharacters and reduce the number of rounds needed by the basic algorithm. The second takes linear time (assuming a constant-size alphabet) by computing the LCS values in a special order and also performs well in practice, especially when $k$ is large.

All our algorithms have some dependency on $\sigma$ and we leave removing this as an open problem. From a practical point of view, it would be interesting to develop parallel algorithms that may further accelerate LCS array construction on large data sets.
\newpage
\bibliography{biblio}

\end{document}